\documentclass[12pt, reqno]{amsart}
\usepackage{a4wide,palatino,epsfig,latexsym,
amsmath,amsthm,graphicx,hyperref,enumerate,morefloats,color, natbib, soul}

\usepackage{tikz}

%\pagestyle{empty}

%\documentclass[12pt]{amsart}
%\setlength{\voffset}{-2cm}
%\setlength{\hoffset}{-1.2cm}
%\setlength{\textwidth}{6in}
%\setlength{\textheight}{10in}

%\nolinebreak[0-4]
%\documentclass{amsart}

%exp
%\usepackage[style=apa, backend=biber]{biblatex}
%\addbibresource{haldane.bib}

%\bibliography{haldane.bib}

%\usepackage{psfig}

%\title{Holey Landscapes, Higher Dimensionality, Higher order epistasis and  Haldane's Meta Stability reconciled by  Fitness Graphs}

\title{Sign epistasis and the geometry of interactions}
%\title{A framework for sign epistasis and related order perturbations}
%; new biology based on Hall's marriage theorem for bipartite graphs}

%\title{A  characterization of fitness graphs that imply higher order epistasis, with application to evolutionary dynamics}

 %Clarified by  Fitness Graphs}

%  \author{Kristina Crona}

        \author{Kristina Crona } 
       % \address{Matematiska institutionen,
        %        Stockholms universitet,
        %        SE-10691 Stockholm,
         %       Sweden}
        
\email{kcrona@american.edu}
        
\theoremstyle{plain}

\newtheorem{theorem}{Theorem}[section]

\newtheorem{lemma}[theorem]{Lemma}

\newtheorem{proposition}[theorem]{Proposition}

\newtheorem{corollary}[theorem]{Corollary}

\theoremstyle{definition}

\newtheorem{definition}[theorem]{Definition}

\newtheorem{remark}[theorem]{Remark}

\newtheorem{example}[theorem]{Example}

\newcommand{\R}{{\mathbb R}}

\begin{document}

\maketitle

% sign epistasis and the geometric theory of gene interctions

\begin{abstract}
Approaches to gene interactions based on sign epistasis have been highly influential in recent time.
Sign epistasis is useful for relating local and global properties of fitness landscapes,
as well as for analyzing evolutionary trajectories and constraints. 
The geometric theory of gene interactions,
on the other hand,  provides complete information on 
interactions in terms of minimal dependence relations.

We propose a new framework that combines
aspects of both approaches.
In particular, we provide efficient tools for identifying sign epistasis
and related order perturbations in large genetic systems,
with applications to the malaria-causing parasite {\emph{Plasmodium vivax}}.
We found that order perturbations beyond sign epistasis
are prevalent in the drug-free environment,
which agrees well with the observation
that reversed evolution back to the ancestral type is difficult.
As a theoretical application, we  investigate how rank orders  of genotypes
with respect to fitness relates to additivity.
%We introduce  a metric on rank 
%orders, that clarifies to what extent
%an order deviates from additive expectations.
\end{abstract}

\section{introduction}
Approaches to gene interactions based on sign epistasis, a concept
introduced in \citet{wwc}, have been highly influential in recent years.
Sign epistasis is useful for relating local and global properties of fitness landscapes  \citep{wwc, ptk, cgb},
as well as for analyzing evolutionary predictability and other aspects of
evolutionary potential and constraints \citep{wdd, fkd, dk}, 
A system has sign epistasis if the sign of the effect of a mutation,
whether positive or negative, depends on genetic background.
For instance, a biallelic  2-locus system has sign epistasis if 
\[ 
w_{00} >w_{10},   \quad w_{11} >w_{01},
\]
where $w_g$ denotes fitness of the genotype.
Here the mutation $0 \mapsto 1$ at the first locus decreases fitness if the background is $0$,
and increases fitness if the background is 1.

The geometric approach to gene interactions developed in \citet{bps}
provides complete information on interactions in terms of a class of linear forms
(circuits, as defined in matroid theory, see the Result section for details).
For two-locus system, one considers the linear form
\[
u= w_{11}+w_{00}-w_{10}-w_{01}.
\]
Notice  that the conditions $w_{00} >w_{10}$ and $w_{11} >w_{01}$ 
implies that $u>0$. This simple observation suggests 
a relation between approaches focused on sign epistasis and  the geometric approach.

Here,  we investigate how the different approaches relate
and propose a framework that combines aspects of both theories.
In brief, the sign of the effect of
a double mutant, or any higher order mutant, may depend 
on background. Such order perturbations are similar to sign epistasis, except that we
consider the effect of replacing blocks rather than a single locus.
In particular, it is of interest to identify block replacements that are universally
beneficial. Importantly, we need not assume block independence, but rather that the sign of the effect
is independent.

\section{results}
We consider  biallelic $n$-locus systems.
For simplicity, we assume that there exists a total order of the genotypes
with respect to fitness, usually referred to as the rank order.

It is straight forward to analyze sign epistasis in the two-locus case.
A system has sign epistasis if at least one of the following four conditions are satisfied:
\begin{align*}
& w_{00} >w_{10}  \text{ and } w_{11} >w_{01}, \\
& w_{00} > w_{01} \text{ and }  w_{11}  >  w_{10}, \\
& w_{00} <w_{10},   \text{ and }  w_{11} <w_{01}, \\
&w_{00} <w_{01},    \text{ and }  w_{11} <w_{10}. 
\end{align*}

Differently expressed,  a system has
sign epistasis exactly if 
at least one of the following expression is negative.
\begin{align}
&(w_{00}-w_{10})(w_{01}-w_{11})  \\
&(w_{00}-w_{01})(w_{10}-w_{11})  
\end{align}

For the rank order $w_{11}> w_{10} >w_{01} >w_{00}$ both (1) and (2) are positive.
On the other hand, for the rank order
$
w_{11}> w_{00} >w_{10} >w_{01}
$
both (1) and (2) are negative. Such a "double flip" is sometimes referred to as reciprocal sign epistasis \citep{ptk}.

For a general analysis of $n$-locus system, one can consider a set of expression analogous to (1) and (2).
It is convenient to introduce circuits for a systematic description. By using linear algebra, we can
keep the discussion elementary.

For the genotypes $00, 10, 01, 11$,  one can
form vectors in $\R^3$ be adding an extra coordinate 1
\[
(0,0,1), (1,0,1), (0,1,1), (1,1,1) .
\]
The four vectors are linearly dependent,
\[
(1,1,1)+ (0,0,1)-(1,0,1)-(0,1,1)= {\bf{0}}
\]
%This expression is unique up to a scalar, 
%and there are no other dependence relations.
Notice that the dependence relation corresponds exactly to 
\[
u= w_{11}+w_{00}-w_{10}-w_{01}.
\]
The form $u$ is referred to as a circuit.
In general, circuits are defined as minimal dependence relations,
in the sense that each proper subset of the vectors 
[with non-zero coefficients] are linearly independent.
 (Here each proper subset of the vectors  $(0,0,1)$, $(1,0,1)$, $(0,1,1)$ and $(1,1,1)$
is linearly independent.)
The description works in general, since
one can form vectors in $\R^{n+1}$ from the vertices of the $n$-cube by adding a coordinate 1.
The circuits are the minimal dependence relations, exactly as in the two-locus case.

% As mentioned in the introduction, sign epistasis
%implies that $u$ is positive or negative. We can give a more 
%precise description.

\begin{remark}
%Signed circuit interactions, i.e., order induced circuit interactions,
%are immediately related to sign epistasis for $n=2$.
The relation between circuits and sign epistasis can be summarized as follows for $n=2$.
\begin{itemize}
\item[(i)]
A rank order of the genotypes $00$, $10$, $01$ and $11$ with respect to fitness implies that  $u>0$ or $u<0$
exactly if the system has sign epistasis.
\item[(ii)]
From the information that the rank order implies  $u>0$ (or $u<0$) alone, one
cannot determine if there are one or two order perturbations, i.e., whether or not the system
has reciprocal sign epistasis.
\item[(ii)]
The system has reciprocal sign epistasis if both expressions 
(1) and (2) are negative, and sign epistasis if at least one
of them is negative.
\end{itemize}
\end{remark}

A similar approach works for a  general $n$-locus systems, 
although one needs to consider more general order perturbations.
For $n=3$, the set of circuits includes 
%\[
%w_{110}+w_{000}-w_{100}-w_{010},
%\]
%that is similar to the circuit for a 2-locus system (the third coordinate is zero for each genotype),
%and
\[
w_{111}+w_{000}-w_{100}-w_{011}.
\] 
Similar to the study of the two-locus case, it is natural to consider the two related expressions

\begin{align}
( w_{111}-w_{011}) (w_{100}-w_{000} )  \\
( w_{111}-w_{100}) (w_{011}-w_{000} ).
\end{align}

If the rank order implies that (3) is negative, then the system has
sign epistasis. If the rank order implies that (4) is negative,
then the order perturbation is of a different type.
Specifically, the sign of the effect of a double mutation $00 \mapsto 11$
at the second and third loci depends on background.

\begin{definition}
A system has a rectangular perturbation if the
sign of the effect of replacing a subset of loci, according to the rule $0 \mapsto 1$ and  $1\mapsto 0$,
depends on background. The size of the perturbation refers to the number of loci replaced.
\end{definition}

In particular, a rectangular perturbation of size one is
a case of sign epistasis.
However, the next example shows that a system can have rectangular perturbations
even if it does not have sign epistasis. 

\begin{example}
Consider the system:
\[
w_{000}=1,
w_{100}=1.1,  w_{010}=1.12 , w_{001}=1.09,
w_{110}= 1.2, w_{101}=1.22, w_{011}=1.19,
w_{111}=1.3.
\]
The change $0 \mapsto 1$ at any locus increases fitness
regardless of background. Consequently, there is no sign epistasis.
However, the sign of the effect of the change  $10 \mapsto 01$ 
at the first pair of loci depends on background since
\[
w_{100} <w_{010}, \quad w_{101} > w_{110}.
\]
\end{example}
Example 1 is interesting for another reason.
The example  shows that the rank order 
\[
w_{111}>w_{101} >w_{110} >w_{011} >w_{010} >w_{100} >w_{001}> w_{000},
\]
is incompatible with additive fitness, even though the system has no sign epistasis.

For investigating
rectangular order perturbations,
one can use the class of circuits $\mathcal C$ where exactly four variables have non-zero coefficients
For $n=3$, there are 12 such circuits.
It follows that one can identify all rectangular perturbations by
checking  24 expressions, including (3) and (4) above, (see the appendix for the complete list).

\begin{proposition}
Let $\mathcal C$ be the class of circuits with non-zero coefficients for exactly four elements.
\begin{itemize}
\item[(i)]
Each case of sign epistasis corresponds a signed circuit interaction for an element in  $\mathcal C$. i.e., 
the rank order of genotypes with respect to fitness implies that the circuit is positive or negative.
\item [(ii)]
More generally, each rectangular  perturbation corresponds to a signed circuit interaction in  $\mathcal C$.
\item[(iii)]
Each circuit in $\mathcal C$ corresponds to exactly two (potential) rectangular perturbations.
\end{itemize}
\end{proposition}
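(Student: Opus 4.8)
The plan is to reduce all three parts to a single structural description of the four‑term circuits and then read off the perturbations. First I would prove a structure lemma: a form lies in $\mathcal{C}$ if and only if, up to sign and scaling, it equals $w_a + w_b - w_c - w_d$ for four distinct genotypes with $a+b=c+d$ (coordinatewise sum); equivalently, there is a reference genotype $r$ and two disjoint nonempty blocks of loci $S,T$ with $a=r$, $b=r^{S\cup T}$, $c=r^{S}$, $d=r^{T}$, where $r^{X}$ denotes $r$ with the loci in $X$ flipped. In particular the coefficients of a circuit in $\mathcal{C}$ are forced to be $\pm 1$, and the four genotypes form a combinatorial rectangle whose active loci split as $S\sqcup T$.

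The heart of the argument, and the step I expect to be the main obstacle, is this lemma, specifically forcing the coefficients to be $\pm 1$ and both blocks to be nonempty. A circuit is a minimal affine dependence $\sum c_i v_i = 0$ with $\sum c_i = 0$ among the lifted vectors $v_i=(g_i,1)$, so all four $c_i\neq 0$ and the genotypes are distinct. The sign pattern of the $c_i$ cannot be $(+,+,+,-)$ or its negative, since that would write one cube vertex as a proper convex combination of the other three, impossible for an extreme point of $[0,1]^{n}$; hence the pattern is $(+,+,-,-)$ and the common normalized point $p$ lies on both segments $[a,b]$ and $[c,d]$. Reading $p$ coordinatewise then pins the weights: at a locus where $a$ and $b$ agree all four genotypes must agree (otherwise $p$ is fractional on one side and integral on the other), and at every active locus the two weight ratios $t,s\in(0,1)$ must satisfy $\{t,1-t\}\cap\{s,1-s\}\neq\emptyset$. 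If $t\neq\tfrac12$ this forces $s\in\{t,1-t\}$, and either choice makes all active loci of a single type, collapsing two of the four genotypes into one --- a contradiction. Thus $t=s=\tfrac12$, the coefficients are $\pm1$, and the presence of at least one active locus of each type yields two nonempty blocks.

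Granting the lemma, (iii) is immediate. A rectangular‑perturbation reading of the circuit is a perfect matching of the positive pair $\{a,b\}$ with the negative pair $\{c,d\}$, and there are exactly two: the matching $\{(a,c),(b,d)\}$ and the matching $\{(a,d),(b,c)\}$. In the first the matched genotypes differ exactly on $S$, so it groups $u$ into the differences $w_a-w_c$ and $w_b-w_d$, i.e.\ the effects of flipping $S$ in the two backgrounds determined by $T$; in the second they differ exactly on $T$, giving the effects of flipping $T$ in the two $S$‑backgrounds. Since $S\neq T$ these are two genuinely different potential rectangular perturbations, and no further reading exists.

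Finally I would prove (ii), with (i) following as the case $|S|=1$. Given a rectangular perturbation --- a flip block $S$, a background block $T$, and a reference $r$ for which the two block‑flip effects $e_1=w_{r^{S}}-w_r$ and $e_2=w_{r^{S\cup T}}-w_{r^{T}}$ have opposite signs --- the four genotypes $r,r^{S},r^{T},r^{S\cup T}$ are distinct and satisfy $r+r^{S\cup T}=r^{S}+r^{T}$, hence form a circuit $u=w_r+w_{r^{S\cup T}}-w_{r^{S}}-w_{r^{T}}\in\mathcal{C}$ by the lemma. Grouping gives $u=e_2-e_1$, so if $e_1>0>e_2$ then $u<0$, and if $e_1<0<e_2$ then $u>0$; in either case the sign of $u$ is fixed regardless of magnitudes. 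Because the comparisons that fix the opposite signs of $e_1$ and $e_2$ are already dictated by the rank order, every fitness vector consistent with the rank order gives $u$ this same sign; that is, the rank order implies the circuit is positive or negative, which is precisely a signed circuit interaction. Taking $S=\{i\}$ a single locus, and letting $T$ be the set of loci on which the two backgrounds differ, recovers ordinary sign epistasis and proves (i).
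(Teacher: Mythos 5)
Your proposal is correct, and it is in fact more complete than what the paper itself offers: the paper states the proposition without a dedicated proof, relying on the circuit--rectangle correspondence as known from the geometric theory of \citet{bps} and recording only the key geometric fact (each rectangle in the $n$-cube carries exactly two perturbations, one per pair of parallel edges) inside the proof of Theorem 2.5. Your route is conceptually the same correspondence, but you actually prove the structure lemma that the paper takes for granted --- that every four-term circuit of the lifted cube configuration is, up to sign and scaling, $w_a+w_b-w_c-w_d$ with $a+b=c+d$ --- and your extremality argument is the right tool: ruling out the sign pattern $(+,+,+,-)$ because a cube vertex is an extreme point of $[0,1]^n$, then pinning the weights to $t=s=\tfrac12$ by the coordinatewise analysis, cleanly forces the $\pm 1$ coefficients and the decomposition of the active loci into two nonempty blocks $S\sqcup T$. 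Your reading of (iii) as the two matchings of the positive pair with the negative pair is exactly the paper's ``one for each pair of parallel edges,'' and the identity $u=e_2-e_1$ makes the sign implication in (ii) transparent, with (i) as the case $|S|=1$. One small point to make explicit: your part (ii) invokes the converse direction of the structure lemma (that the rectangle relation $w_r+w_{r^{S\cup T}}-w_{r^S}-w_{r^T}$ is genuinely a \emph{circuit}, i.e.\ minimal), and your detailed argument only establishes the forward direction; the converse needs the one-line observation that any three distinct cube vertices are affinely independent, since three affinely dependent points are collinear and a line meets the extreme points of $[0,1]^n$ in at most two vertices. With that line added, the proof is self-contained, which is a genuine improvement over the paper's presentation.
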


\begin{theorem}
The total number of (potential) rectangular  perturbations for an
$n$-locus system is
\[
2  \left( \frac{6^n}{8} - 4^{n-1} + 2^{n-3}  \right) .
\]
Moreover, the number of rectangular perturbations of size $k$ (exactly $k$ loci are  replaced) 
equals
\[
2^{k-1} {n \choose k} {2^{n-k}  \choose 2}.
\]
\end{theorem}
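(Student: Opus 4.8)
The plan is to prove the size-$k$ formula first, by a direct parametrization of the potential rectangular perturbations, and then to obtain the total count by enumerating the circuits in $\mathcal{C}$ and invoking the Proposition, part (iii), which asserts that each such circuit yields exactly two potential perturbations. As a consistency check, the total can equally be recovered by summing the size-$k$ formula over $k$.

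For the size-$k$ formula I would argue that a potential rectangular perturbation replacing exactly $k$ loci is completely specified by three independent pieces of data. First, the block $S$ of replaced loci, chosen in $\binom{n}{k}$ ways. Second, the pattern on $S$ that is being flipped: since the rule $0\mapsto 1$, $1\mapsto 0$ sends a pattern $p\in\{0,1\}^k$ to its bitwise complement $\bar p$, and $p\neq\bar p$ always, the $2^k$ patterns fall into $2^{k-1}$ complementary pairs $\{p,\bar p\}$, each giving the same comparison up to an overall sign. Third, the unordered pair of distinct backgrounds on the remaining $n-k$ loci against which the effect is compared, chosen in $\binom{2^{n-k}}{2}$ ways. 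Multiplying gives $2^{k-1}\binom{n}{k}\binom{2^{n-k}}{2}$. The step requiring care is to confirm that this data is in bijection with the size-$k$ perturbations with no double counting: the four genotypes $(p,g_1)$, $(\bar p,g_1)$, $(p,g_2)$, $(\bar p,g_2)$ determined by the choices must form exactly the circuit in $\mathcal{C}$ attached to the perturbation by the Proposition, so that this enumeration matches the signed-circuit description.

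For the total I would count the circuits in $\mathcal{C}$ directly. Writing a candidate circuit as a relation $w_a+w_b-w_c-w_d$ with $a+b=c+d$ in $\R^n$, the defining equation decouples coordinatewise: in each coordinate the admissible quadruple $(a_i,b_i,c_i,d_i)\in\{0,1\}^4$ with $a_i+b_i=c_i+d_i$ has $1+4+1=6$ solutions, so there are $6^n$ ordered solutions. I then subtract the degenerate solutions in which the four genotypes are not distinct: since $a+b=c+d$, a coincidence $a=b$ forces all four equal, while $a=c$ forces $b=d$ and $a=d$ forces $b=c$, and a short count of these disjoint cases gives $2\cdot 4^n-2^n$ degenerate tuples, hence $6^n-2\cdot 4^n+2^n$ nondegenerate ones. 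Finally I divide by the symmetry factor $8$ coming from the three independent relabelings that fix a parallelogram — swapping the two endpoints of each diagonal and swapping the two diagonals — after checking that the diagonal pairing of a four-point circuit is unique, so no circuit is counted more than once. This yields $\tfrac{6^n}{8}-4^{n-1}+2^{n-3}$ circuits, and doubling via the Proposition gives the stated total $2\!\left(\tfrac{6^n}{8}-4^{n-1}+2^{n-3}\right)$.

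The main obstacle is purely bookkeeping: pinning down the symmetry factor and the degeneracy locus in the circuit count, and, on the perturbation side, verifying that each circuit splits its varying coordinates as $V=S\sqcup T$ into the two edge-blocks, so that the two perturbations attached to it have sizes $|S|$ and $|T|$ with $|S|+|T|=|V|$. Matching this partition against the size-$k$ parametrization is what guarantees that $\sum_{k}2^{k-1}\binom{n}{k}\binom{2^{n-k}}{2}$ agrees with $2\cdot(\#\mathcal{C})$; the agreement can be confirmed in closed form by the binomial theorem, since $2^{k-1}\binom{2^{n-k}}{2}=2^{2n-k-2}-2^{n-2}$ and $\sum_{k=1}^{n}\binom{n}{k}2^{2n-k-2}=2^{n-2}3^n-4^{n-1}$.
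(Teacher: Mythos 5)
Your proof is correct, but the first half takes a genuinely different route from the paper. For the size-$k$ formula your parametrization — positions in $\binom{n}{k}$ ways, complementary word pairs in $2^{k-1}$ ways, background pairs in $\binom{2^{n-k}}{2}$ ways — is in substance identical to the paper's own argument for the second claim. For the total, however, the paper proceeds geometrically: it fixes a vertex of the $n$-cube, partitions the loci into blocks $S_1, S_2, S_3$ (with $S_1, S_2$ nonempty) whose flips generate the other three vertices, counts $S(n,2) + 3\,S(n,3)$ rectangles per vertex via a lemma on Stirling numbers of the second kind, then multiplies by $2^n$ vertices and divides by $4$. You instead count ordered quadruples solving the parallelogram equation $a+b=c+d$ coordinatewise ($6$ solutions per coordinate, hence $6^n$), strip out the degenerate tuples by inclusion--exclusion over the coincidences $a=c$ and $a=d$ (correctly getting $2\cdot 4^n - 2^n$, since $a=b$ or $c=d$ forces all four points equal), and divide by the symmetry factor $8$, justified by the uniqueness of the diagonal pairing — indeed if both $a+b=c+d$ and $a+c=b+d$ held, subtracting would give $b=c$. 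Your arithmetic checks out: $(6^n - 2\cdot 4^n + 2^n)/8 = 6^n/8 - 4^{n-1} + 2^{n-3}$. What each approach buys: the paper's block-partition picture makes the two edge-pairings of each rectangle (hence the factor $2$, and the split $V = S_1 \sqcup S_2$ of varying coordinates into the two perturbation sizes) immediately visible, and reuses the same structure in the size-$k$ count; your algebraic count is self-contained, needs no Stirling identities, and your closing verification that $\sum_k 2^{k-1}\binom{n}{k}\binom{2^{n-k}}{2}$ equals twice the rectangle count is a consistency check the paper does not perform. One shared tacit point: both you and the paper use without proof that the circuits in $\mathcal{C}$ are exactly the parallelogram relations $w_a + w_b - w_c - w_d$ with $a+b=c+d$ (which on the cube are automatically rectangles, since the supports of $c-a$ and $d-a$ are forced to be disjoint); you at least flag this identification as the step requiring care, so nothing essential is missing.
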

For a proof of this result, see Methods.
In particular, it follows from Theorem 2.5 that the number of potential order perturbations
for a three-locus system is 24, as mentioned (see the appendix).
However, it is not possible that all expressions are negative
for one and the same system.

The next result follows immediately from the theorem.
\begin{corollary}
A complete investigation of sign epistasis for an $n$-locus system, 
requires that one checks the sign of
\[
 n  \, {2^{n-1}  \choose 2},
 \]
expressions.
\end{corollary}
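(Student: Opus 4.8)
The plan is to obtain the corollary as the special case $k=1$ of the size-$k$ count already established in Theorem 2.5. First I would recall the identification made earlier in the paper: a rectangular perturbation of size one, where the block being replaced consists of a single locus, is precisely a case of sign epistasis. Consequently, a complete investigation of sign epistasis amounts to enumerating and checking exactly the potential rectangular perturbations of size one. By Proposition 2.4 and the surrounding discussion, each such potential perturbation is detected by checking the sign of one associated expression (the single-locus analogue of the product expressions (1)--(4)), so the number of sign checks required equals the number of potential rectangular perturbations of size one.

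With this identification in hand, the second step is a direct substitution into the formula of Theorem 2.5. That theorem gives the number of rectangular perturbations of size $k$ as $2^{k-1} {n \choose k} {2^{n-k} \choose 2}$. Setting $k=1$ gives $2^{0}\, {n \choose 1}\, {2^{n-1} \choose 2} = n\, {2^{n-1} \choose 2}$, which is exactly the claimed count. No further estimation or combinatorial argument is needed once the specialization is performed.

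I expect no genuine obstacle here, since the result is a pure specialization of an already-proved count. The only point requiring care is the bookkeeping that ties the phrase ``number of expressions one must check'' to the ``number of potential rectangular perturbations of size one''. I would therefore make explicit that each potential size-one perturbation is tested by exactly one sign check, so that the two quantities coincide and the substitution above yields the stated bound. Given that correspondence, the corollary follows immediately from Theorem 2.5.
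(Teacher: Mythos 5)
Your proof is correct and is exactly the paper's argument: the paper derives the corollary as an immediate consequence of Theorem 2.5, i.e., by specializing the size-$k$ formula $2^{k-1}\binom{n}{k}\binom{2^{n-k}}{2}$ at $k=1$ and using the identification of size-one rectangular perturbations with sign epistasis. Your additional care in making explicit that each potential size-one perturbation corresponds to one sign check is a harmless elaboration of the same route, not a different approach.
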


We applied rectangular perturbations to a study of
the malaria-causing parasite {\emph{Plasmodium vivax}} \citep{oh}.
The original study concerns a 4-locus system
exposed to different concentrations of the
anti-malarial drug pyrimethamine (PYR).
The quadruple mutant denoted 1111 has the highest degree 
of drug resistance, whereas the  genotype 0000
has the highest fitness among all genotypes
 in the drug-free environment.

We compared the highest concentration of the drug
and the drug-free environment. 
Sign epistasis were prevalent in both fitness landscapes;
the landscapes had 54 and 55 rectangular 
perturbations of size one, respectively.
However, the latter landscape had
approximately twice as many perturbations of size two and three.
This finding agrees well with the
authors' observation that resistance development
is a relatively straight forward process, whereas
reversed evolution from the four-tuple mutant  $1111$ back  to 
the ancestral type is difficult.

\begin{table}
\begin{tabular}{ l  c   c   c   c}
&  & & & \\
 \hline
Perturbation size   &  1 &  2 &  3 &  size (1-3) \\
\hline 
 Drug-exposed   & 54 &  39 & 9  & 102\\ 
Drug-free & 55 &  21 & 5   &  81\\
\hline
\end{tabular}
\caption{Rectangular perturbations for drug exposed and drug free malaria
 fitness landscapes.
The prevalence of sign epistasis (size 1 perturbations) is similar.
However, rectangular perturbation of size 2 and 3 differ substantially
between the landscapes.}
\end{table}

%54+21+18+9
 %55+11+10+5

One can also apply rectangular perturbations for an analysis of rank orders and additivity.
For a three-locus systems, there are $8!=40,320$
rank orders. 
As remarked in \citet{cgg}, exactly 384 orders are compatible
with absence of sign epistasis.
By using the list  in the appendix,
one can verify that 288  orders imply
rectangular order perturbations.
Moreover, one can verify that the remaining 96 orders
are compatible with additive fitness.
In summary, $0.24$ percent of all rank orders for three-locus
systems are compatible with additive fitness.
In fact, after relabeling only the following two orders are
compatible with additive fitness.
\begin{align*}
w_{111}>w_{110}>w_{101}>w_{011}>w_{100}>w_{010}>w_{001}>w_{000} \\
w_{111}> w_{110}>w_{101}>w_{100}>w_{011}>w_{010}>w_{001}>w_{000}.   
\end{align*}

\section{methods}
%l=w_{101}+w_{010}-w_{011}-w_{100
For counting rectangular perturbations, one needs to find all
rectangles with vertices in an $n$-cube.
We provide a proof for the reader's convenience.
The proof depends on Stirling numbers of the second kind.
We refer to \citet{g} for concepts used in this section.

\begin{lemma}
For Stirling numbers of the second kind $S(n,k)$ the following identities hold.
\[
\begin{aligned}
&S(n,2) =2^{n-1} -1 \\
&S(n,3)=\frac{1}{6}(3^n - 3 \cdot 2^{n} +3)
\end{aligned}
\]
\end{lemma}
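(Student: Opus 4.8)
The plan is to derive both identities at once from the standard fact that $k!\,S(n,k)$ equals the number of surjections from an $n$-element set onto a $k$-element set, and then to evaluate that surjection count by inclusion--exclusion. Concretely, I would start from the observation that a surjection $[n]\to[k]$ is the same data as an ordered partition of $[n]$ into $k$ nonempty blocks; since $S(n,k)$ counts \emph{unordered} partitions into $k$ nonempty blocks, the two differ exactly by the $k!$ ways of labeling the blocks. Hence $k!\,S(n,k)$ equals the number of surjections from $[n]$ onto $[k]$.

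Next I would count surjections by inclusion--exclusion over the omitted target values. The total number of functions $[n]\to[k]$ is $k^n$; subtracting those that miss a prescribed set of $j$ target values and correcting for overcounting in the usual alternating fashion yields
\[
k!\,S(n,k)=\sum_{j=0}^{k}(-1)^j\binom{k}{j}(k-j)^n .
\]
Specializing to $k=2$ gives $2\,S(n,2)=2^n-2\cdot 1^n=2^n-2$, hence $S(n,2)=2^{n-1}-1$. Specializing to $k=3$ gives $6\,S(n,3)=3^n-3\cdot 2^n+3\cdot 1^n=3^n-3\cdot 2^n+3$, hence $S(n,3)=\tfrac16\bigl(3^n-3\cdot 2^n+3\bigr)$, which are precisely the two claimed formulas.

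The argument is elementary, so there is no genuine obstacle; the only points requiring care are bookkeeping ones. First, one must check that the $j=k$ term (all target values omitted) contributes $(-1)^k\binom{k}{k}0^n=0$ for $n\ge 1$, so it vanishes and need not be carried through the specializations. Second, the division by $k!$ is legitimate precisely because every surjection has all $k$ blocks nonempty and distinct, so each unordered partition is counted exactly $k!$ times, with no fixed-point or symmetry corrections. As an alternative that avoids inclusion--exclusion entirely, I could instead set up the recurrence $S(n,2)=2\,S(n-1,2)+1$ with $S(1,2)=0$ and solve it to obtain $2^{n-1}-1$, then iterate $S(n,3)=3\,S(n-1,3)+S(n-1,2)$ by induction; but the surjection count is preferable here because it produces both identities from a single closed formula.
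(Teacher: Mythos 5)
Your proposal is correct and is essentially the paper's argument: the paper likewise counts labeled (ordered) partitions and divides by the number of labelings, handling $k=2$ via the $2^n-2$ nonempty proper subsets and $k=3$ via $3^n-3(2^n-2)-3=3^n-3\cdot 2^n+3$ ordered triples divided by $6$. Your inclusion--exclusion surjection formula $k!\,S(n,k)=\sum_{j=0}^{k}(-1)^j\binom{k}{j}(k-j)^n$ is just the systematic packaging of the paper's direct subtraction, so the two proofs coincide in substance.
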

\begin{proof}
The first formula holds since there
are $2^n-2$ non-empty proper subsets of $n$ elements,
and each partition corresponds to exactly two subsets.
Similarly, the second formula can be derived 
from the observation that one can construct three labeled subsets
of $n$ elements in $3^n$ ways. After reducing 
for all cases with empty sets, the number of alternatives is
\[
3^n-3(2^n-2)-3=3^n-3 \cdot 2^n+3.
\]
Each partition corresponds to 6 alternatives, which completes the argument.
\end{proof}

\begin{lemma}
There are 
\[
6^n/8 - 4^{n-1} + 2^{n-3} 
\]
rectangles with vertices in an $n$-cube.
\end{lemma}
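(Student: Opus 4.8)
The plan is to count rectangles by interpreting their vertices combinatorially in terms of the coordinate pattern they exhibit across the $n$ positions. A rectangle in the $n$-cube is determined by four vertices that form a planar rectangle; since all edges of the cube are axis-aligned, such a rectangle is spanned by choosing two disjoint nonempty sets of coordinate directions $A$ and $B$ along which opposite pairs of vertices differ, while agreeing on the remaining coordinates. More precisely, a vertex $v$, together with two directions-of-variation, produces a rectangle whose four corners agree on all coordinates outside $A \cup B$, take both values on the $A$-block in a correlated way, and both values on the $B$-block. So the first step is to set up the bijection between rectangles and the data of: a partition of the varying coordinates into two blocks $A$ and $B$, together with the fixed background on the complementary coordinates and the choice of how the two diagonally-opposite pairs sit. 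This reduces the geometric count to a counting problem about ordered or unordered set configurations.

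The second step is to organize the count according to how many coordinates vary and how they split. I would let the coordinates be partitioned into three roles: those fixed across all four vertices, those in block $A$, and those in block $B$, where $A$ and $B$ are both nonempty (a rectangle must genuinely vary in two independent directions). Summing over the number of fixed coordinates and over the sizes of $A$ and $B$, the count naturally involves assigning each of the $n$ coordinates to one of three labeled categories, then correcting for the constraint that the two varying blocks are nonempty and interchangeable. This is exactly where Stirling numbers of the second kind enter: partitioning the varying coordinates into two nonempty unordered blocks is counted by $S(m,2)$ for $m$ varying coordinates, and the expression $3^n - 3 \cdot 2^n + 3$ from the previous lemma reflects the three-way labeled assignment with empty categories removed. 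I would therefore rewrite the target expression $6^n/8 - 4^{n-1} + 2^{n-3}$ so that it manifestly factors through $S(n,3)$ and related quantities, and match terms.

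The third step is the bookkeeping that converts the abstract partition count into the factor $6^n/8$ and its corrections. Each varying coordinate carries a binary label (which of the two cube-values the ``base'' corner takes relative to its block partner), so each such coordinate contributes a factor of $2$; combined with the three-way role assignment this produces the $6^n$, and the division by $8$ together with the lower-order terms $-4^{n-1} + 2^{n-3}$ absorbs the overcounting from rectangle symmetries (the four vertices and the two block labels can be permuted) and the subtraction of degenerate configurations where a block is empty. Concretely, I expect to derive
\begin{equation}
\#\{\text{rectangles}\} = \frac{1}{2}\sum_{m=2}^{n} \binom{n}{m} 2^{m} S(m,2),
\end{equation}
where $\binom{n}{m}$ chooses the varying coordinates, $S(m,2)$ splits them into two unordered blocks, the factor $2^m$ records the binary orientation data, and the overall $1/2$ removes a residual symmetry; then one substitutes $S(m,2) = 2^{m-1}-1$ from Lemma~3.1 and evaluates the sum in closed form using the binomial theorem.

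The main obstacle will be pinning down the symmetry factors exactly—deciding precisely how many of the $2^m$ orientation choices and block relabelings collapse to the same geometric rectangle, so that the leading coefficient comes out to $1/8$ rather than $1/4$ or $1/2$. It is easy to be off by a global factor of $2$ here, so I would verify the formula against the $n=3$ case, where the statement preceding the lemma asserts there are $12$ such rectangles (matching the $12$ circuits in the class $\mathcal C$); checking that $6^3/8 - 4^2 + 2^0 = 27 - 16 + 1 = 12$ confirms the normalization before trusting the general algebra.
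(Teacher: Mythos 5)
Your overall strategy---classifying each coordinate as fixed, in block $A$, or in block $B$, and invoking Stirling numbers---is sound, and is in fact close to the paper's own argument (the paper counts from a base vertex: $3S(n,3)+S(n,2)$ configurations per vertex, times $2^n$ vertices, divided by $4$ since each rectangle has four vertices). But your displayed counting identity is wrong, and not merely by the global factor of $2$ you flagged as the risk. Two concrete problems: first, you omit the factor $2^{n-m}$ counting the possible backgrounds on the $n-m$ fixed coordinates, even though your prose explicitly mentions this choice; second, the orientation bookkeeping is off---once the unordered partition $\{A,B\}$ of the $m$ varying coordinates is fixed, a rectangle is determined by an \emph{unordered} pair of complementary patterns on $A$ and one on $B$, giving $2^{|A|-1}\cdot 2^{|B|-1}=2^{m-2}$ choices, not $2^m$ tempered by a residual $\tfrac12$. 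As written, your sum evaluates (via $S(m,2)=2^{m-1}-1$ and the binomial theorem) to $\bigl(5^n-2\cdot 3^n+1\bigr)/4$, which has the wrong exponential base entirely: at $n=3$ it gives $18$, not $12$. Your own sanity check would have caught this had you applied it to the intermediate formula rather than only to the target closed form.

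The repair within your framework is direct: a rectangle corresponds bijectively to the data (varying set $M$ with $|M|=m\ge 2$, unordered partition of $M$ into nonempty blocks $\{A,B\}$, a background in $\{0,1\}^{n-m}$, and complementary pattern-pairs on $A$ and on $B$), with no further symmetry division needed since $S(m,2)$ already counts unordered block pairs. This yields
\[
\sum_{m=2}^{n} \binom{n}{m} S(m,2)\, 2^{n-m}\, 2^{m-2}
= 2^{n-2}\sum_{m=2}^{n} \binom{n}{m}\bigl(2^{m-1}-1\bigr)
= 2^{n-3}\bigl(3^n - 2^{n+1} + 1\bigr)
= \frac{6^n}{8} - 4^{n-1} + 2^{n-3},
\]
in agreement with the lemma (and giving $12$ at $n=3$, $1$ at $n=2$). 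Note that the corrected per-partition factor $2^{n-m}\cdot 2^{m-2}=2^{n-2}$ is independent of $m$, which is exactly why the paper's per-vertex version, $\tfrac{2^n}{4}\bigl(3S(n,3)+S(n,2)\bigr)$, is equivalent: fixing a base vertex absorbs both the background and the pattern-pair choices, at the cost of the division by $4$.
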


\begin{proof}
For each vertex $s_1 \dots s_n$ in an $n$-cube, one
can construct a rectangle with vertices
on the $n$-cube as follows.
Distribute the set of $n$ loci into three subsets
$S_1, S_2$ and $S_3$, where the intersection
of each pair of sets is empty, and where
$S_1$ and $S_2$ are non-empty.

From the vertex $s_1 \dots s_n$,
one constructs the remaining vertices
by replacing sets of loci, according to the
rule $0 \mapsto 1$, and $1 \mapsto 0$.
Specifically, one vertex is obtained by replacing all elements in $S_1$,
one  by replacing all elements in $S_2$, 
and the last by replacing all elements in $S_1 \bigcup S_2$.
In case $S_3$ is empty,  one can construct $S(n,2)$ rectangles.
If  all three sets are non-empty, then one can choose 
$S_3$ in three ways, and consequently construct $3 \cdot S(n,3)$ rectangles.
 In total one obtains $3 \cdot S(n,3)+S(n,2)$ rectangles starting from a particular vertex.
There are $2^n$ vertices in the $n$-cube and each rectangle has four vertices.
By the previous lemma, the number of rectangles is
\[\frac{2^n}{4} \cdot ( 3 \cdot S(n,3)+S(n,2) )=6^n/8 - 4^{n-1} + 2^{n-3}, \]
which completes the proof.
\end{proof}

%An example of a rectangular permutation is
%\[
%w_{0100}>w_{1101} {\text{ and }}
%w_{0010} <w_{1011}.
%\]

\noindent
We can now prove the main result.

\noindent
{\emph{Proof of Theorem 2.5.}}
The total number of (potential) rectangular  perturbations for an
$n$-locus system is
\[
2  \left( \frac{6^n}{8} - 4^{n-1} + 2^{n-3}  \right).
\]
A rectangle with vertices in the $n$-cube corresponds to exactly two rectangular order perturbations,
one for each pair of parallel edges. Consequently, the result follows from Lemma 3.1.

The second part of the theorem states
that the number of rectangular perturbations where exactly $k$ loci are  replaced  equals
\[
2^{k-1} {n \choose k} {2^{n-k}  \choose 2}.
\]
The positions of the $k$ loci that change can be chosen in ${n \choose k}$ different ways.
There are $2^k$ words of length $k$. One can choose  the word and its replacement
in $2^{k-1}$ different ways (for instance the replacement of  $110$ is $001$).
 Finally, that are $2^{n-k} $ different backgrounds, so 
 that a pair of backgrounds can be chosen  in  $ {2^{n-k}  \choose 2}$ ways.

%\begin{corollary}
%The number of interactions that reveals sign epistasis
%\[
 %n  \, {2^{n-1}  \choose 2}
%\]
%\end{corollary}
%There are some subtle aspects here.
%For 3-locus systems, exactly 18 interactions indicate sign epistasis (the 18 one-flips).
%However, if one has verified that the 12 comparisons where one locus is fixed (at 0 or 1) do not indicate
%sign epistasis, then there is no sign epistasis in the system. 

\newpage

\section{Appendix}

{\emph{The complete list of expressions for identifying rectangular perturbations for  $n=3$}}.
The first 18 expressions can identify sign epistasis and the last 6 expressions
rectangular perturbations where two loci are replaced.

\medskip
\[
\begin{aligned}
(w_{000}-w_{100})(w_{010}-w_{110}) \\
(w_{000}-w_{100})(w_{001}-w_{101}) \\
(w_{000}-w_{100})(w_{011}-w_{111})  \\
(w_{010}-w_{110})(w_{001}-w_{101}) \\
(w_{010}-w_{110})(w_{011}-w_{111}) \\
(w_{001}-w_{101})(w_{011}-w_{111})\\
(w_{000}-w_{010}) (w_{100}-w_{110})\\
(w_{000}-w_{010}) (w_{001}-w_{011})\\
(w_{000}-w_{010}) (w_{101}-w_{111})\\
(w_{100}-w_{110}) ( w_{001}-w_{011})\\
(w_{100}-w_{110})  (w_{101}-w_{111})\\
(w_{001}-w_{011}) (w_{101}-w_{111})\\
(w_{000}-w_{001}) (w_{100}-w_{101})\\
(w_{000}-w_{001}) (w_{010}-w_{011})\\
(w_{000}-w_{001}) (w_{110}-w_{111})\\
(w_{100}-w_{101})  (w_{010}-w_{011})\\
(w_{100}-w_{101}) (w_{110}-w_{111})\\
(w_{010}-w_{011}) (w_{110}-w_{111}) \\
(w_{000}-w_{110} ) (w_{001}-w_{111})\\
(w_{000}-w_{101})(w_{010}-w_{111})\\
(w_{000}-w_{011}) (w_{100}-w_{111})\\
(w_{100}-w_{010}) (w_{101}-w_{011})\\
(w_{100}-w_{001}) (w_{110}-w_{011})\\
(w_{010}-w_{001}) (w_{110}-w_{101})
\end{aligned}
\]

\bigskip

%\begin{figure}
%\input cube666higher
%\caption{The fitness graph has 4 peaks.}
%\end{figure}

\end{document}